
\documentclass[journal,comsoc]{IEEEtran}
\pdfoutput=1

\usepackage[T1]{fontenc}
\usepackage[latin9]{inputenc}
\usepackage{epsfig,psfrag}
\usepackage{graphicx}
\usepackage{amsmath,amsfonts,amssymb,amsxtra,bm,amsthm}
\usepackage{color}
\usepackage{paralist, tabularx}
\usepackage{subfigure}
\usepackage{balance}
\usepackage[adjust]{cite}
\usepackage{enumitem}
\usepackage{xfrac}
\usepackage[disable]{todonotes}  
\usepackage{multirow}
\usepackage[final]{changes}
\usepackage{pifont}  
\usepackage{subfigure}
\usepackage{picture,slashbox}

\usepackage{pgfplots, filecontents}
\usetikzlibrary{shapes}
\usepgfplotslibrary{patchplots}


\newcommand{\rmv}{\hspace*{-.3mm}}

\newcommand{\rrmv}{\hspace*{-1mm}}

\newtheorem{Theorem}{Theorem}
\newcommand{\TextRevision}[1]{\textcolor{black}{#1}}

\definecolor{WernerRed}{RGB}{233,71,72}
\definecolor{WernerBlue}{RGB}{87,136,175}
\definecolor{WernerGreen}{RGB}{112,190,109}


\hyphenation{op-tical net-works semi-conduc-tor}
\graphicspath{{./Fig/}{../../Matlab/Fig/}}
\DeclareGraphicsExtensions{.pdf}
\setlength\parindent{0pt}  

\begin{document}

\title{\LARGE{Normal Inverse Gaussian Approximation for Arrival Time Difference in Flow-Induced Molecular Communications}}

\author{Werner~Haselmayr,~\IEEEmembership{Member,~IEEE,}
        Dmitry~Efrosinin,~\IEEEmembership{Member,~IEEE,}
        Weisi~Guo,~\IEEEmembership{Member,~IEEE}

}

\maketitle


\begin{abstract}
  \TextRevision{In this paper, we consider molecular communications in one-dimensional flow-induced diffusion channels with a perfectly absorbing receiver. In such channels, the random propagation delay until the molecules are absorbed follows an inverse Gaussian~(IG) 
  distribution and is referred to as first hitting time. Knowing the distribution for the difference of the first hitting times of two molecules is very important if the information is encoded by a limited set of molecules and the receiver exploits their arrival time and/or order. }
  Hence, we propose a moment matching approximation by a normal inverse Gaussian~(NIG) distribution and we derive an expression for the asymptotic tail probability. Numerical evaluations showed that the NIG approximation matches very well with the exact solution obtained by numerical convolution of the IG density functions. Moreover, the asymptotic tail probability outperforms state-of-the-art tail approximations.
\end{abstract}

\begin{IEEEkeywords}
  First hitting time, molecular communications, flow-induced diffusion channel, normal inverse Gaussian distribution, inverse Gaussian distribution
\end{IEEEkeywords}


\section{Introduction}
\label{sec:intro}

\IEEEPARstart{M}{\lowercase{olecular communications}} (MC) broadly defines information transmission using chemical signals \cite{Farsad16}. Due to its ultra-high efficiency \cite{Rose15ICC} and bio-compatibility it is a
promising candidate for communications at nano-scale. The envisaged applications of MC are in the area of biomedical, environmental and industrial engineering \cite{Nakano2013,Atakan_16}.

In MC the information can be encoded using molecule's concentration \cite{Kuran_12}, number \cite{Lin15}, release time~\cite{Srinivas12}, type \cite{Kim_12} or a combination of the aforementioned methods. The information molecules can be transported from the transmitter to the receiver through pure diffusion, diffusion with flow, active transport (e.g., molecular motors \cite{Farsad_12}) and bacteria \cite{Qiu_17}. 
Currently, most research interest is devoted to diffusion-based MC. In diffusive MC the receivers can be classified as either passive or active~\cite{Noel_16}. A passive receiver only observes the information molecules, whereas an active receiver detects the molecules due to a reaction between the receiver and the molecules. 

The first hitting time denotes the random propagation delay until a molecule arrives at the receiver. 
In case of a perfectly absorbing receiver\footnote{A perfectly absorbing receiver detects and removes the molecule as soon as it arrives at the receiver.} and a one-dimensional\footnote{In a three-dimensional environment, the first hitting time follows a scaled L\'evy distribution for pure diffusion channels~\cite{Yilmaz14}, but is not available for flow-induced diffusion channels.} diffusion channel the first hitting time is distributed according the L\'evy~\cite{Farsad15} and inverse Gaussian (IG)~\cite{Srinivas12} distribution, respectively.


\TextRevision{It is envisioned that MC systems employ nano-machines with very limited capabilities and, thus, it is very likely that molecular signals are represented by a limited set of molecules rather than on the emission of a large number of molecules. In this case, the receiver exploits the arrival time and/or order of the individual molecules, instead of using the received concentration level~\cite{Lin15}. Hence, knowing 
the distribution for the difference of the first hitting times of two molecules becomes important. For example, to determine the out of order arrival probability in type-based information encoding~\mbox{\cite{Shih13,Lin15,Hsieh_13, Haselmayr_18}} or to characterize the noise when the information is encoded in the time duration between two consecutive molecule releases~\cite{Farsad15}. Unfortunately, this distribution has only been derived for pure diffusion channels, following a stable distribution~\cite{Farsad15}.} 

\TextRevision{Hence, in this letter we investigate the distribution for the difference of the first hitting times of two molecules in flow-induced diffusion channels with an absorbing receiver. We propose a moment matching approximation by a normal inverse Gaussian~(NIG) distribution and derive an expression for the asymptotic tail probability. We show numerically that the NIG approximation matches very well with the exact solution derived by numerical convolution of the IG density functions. Moreover, compared to state-of-the-art results~\cite{Shih13}, the presented asymptotic tail probability is more generally applicable and converges faster.}

\section{Inverse Gaussian Distribution}
\label{sec:sys_mod}


In this section, we briefly discuss the main properties of the IG distribution. The probability density function (PDF) of an IG-distributed random variable $X$ is given by \cite{Barndorff_97}
\begin{align}
  f_X(x) & = \frac{a}{\sqrt{2\pi}}\exp(ab)x^{-3/2}\exp\left(-\frac{1}{2}\left(a^2 x^{-1} + b^2 x\right)\right), \ x > 0,
  \label{eq:ig_pdf}
\end{align}

with the parameters $a > 0$ and $b > 0$. We indicate an IG-distributed random variable with the parameters $(a,b)$ by \mbox{$X \sim \text{IG}(a,b)$}. The cumulative distribution function (CDF) can 
be expressed as
\begin{align}
  F_X (x) = & \phi\left(bx^{1/2} - ax^{-1/2}\right) + \exp\left(2ab\right) \nonumber \\
            & + \phi\left(-bx^{1/2} - ax^{-1/2}\right) , \quad x > 0,
  \label{eq:ig_cdf}
\end{align}

with the CDF of the standard normal distribution \mbox{$\phi(x) = 1/\sqrt{2\pi} \int_{-\infty}^x \exp(-t^2/2)\text{d}t$}. Moreover, the tail probability is defined as $\bar{F}_X(x) =  1 - F_X(x)$. The moment-generating function of $X$ is given by
\begin{align}
  M_X(t) 
         & = \exp\left(ab - a\sqrt{b^2 - 2t}\right).
  \label{eq:mom_fct_ig}
\end{align}

%

\section{Normal Inverse Gaussian Approximation}
\label{sec:nig_approx}


Let's consider a random variable $Z = X_1 - X_2$, with \mbox{$X_1 \sim \text{IG}(a_1,b_1)$} and $X_2 \sim \text{IG}(a_2,b_2)$. Assuming $X_1$ and $X_2$ are independent, the PDF of $Z$ can be expressed as 
\begin{align}
  f_Z(z)  = \left(f_{X_1}\ast f^{-}_{X_2}\right)(z) 
          = \int\limits_{-\infty}^{\infty} f_{X_1}(u) f_{X_2}(u-z) \text{d}u,
  \label{eq:pdf_ig_diff}
\end{align}

with $f^{-}_{X_2} = f_{X_2}(-x)$. The moment-generating function of $Z$ is given by
\begin{align}
  M_Z(t) & = M_{X_1}(t) M_{X_2}(-t) \nonumber \\
         & = \exp\left(a_1b_1 + a_2b_2 - \left( a_1\sqrt{b_1^2 - 2t} + a_2\sqrt{b_2^2 + 2t} \right) \right).
  \label{eq:mom_fct_ig_diff}
\end{align}

By comparing \eqref{eq:mom_fct_ig} and \eqref{eq:mom_fct_ig_diff} it can be easily verified that $Z$ does not follow an IG distribution.
Moreover, a closed-form expression for the PDF of $Z$ cannot be found, neither through the convolution of the PDFs of $X_1$ and~$X_2$ given in \eqref{eq:ig_pdf}, nor by Laplace transform of the moment-generating function in \eqref{eq:mom_fct_ig_diff}. Thus, we propose an approximation by the NIG distribution. We chose the NIG distribution since it provides a flexible system of distributions, including heavy-tailed and skewness distributions. Moreover, it was shown  in \cite{Eriksson_09}  that the NIG approximation has a smaller approximation error compared to the well-known Gram-Charlier~\cite{Charlier_05} and Edgeworth~\cite{Edgeworth_07} expansion.

The PDF of a NIG-distributed random variable $Y$ is defined by \cite{Eriksson_09}
\begin{align}
  f_Y(y) = & \frac{\alpha \delta}{\pi}\exp\left(\delta \sqrt{\alpha^2 - \beta^2} - \beta (y-\mu)\right) \nonumber \\
           & \times \frac{K_1\left(\alpha\sqrt{\delta^2 + (y-\mu)^2}\right)}{\sqrt{\delta^2 + (y-\mu)^2}},
  \label{eq:nig_pdf}
\end{align}

with the parameters $\alpha > 0$, $\delta > 0$, $\mu \in \mathbb{R}$ and $-\alpha < \beta < \alpha$ and~$K_1(\cdot)$ denotes the  modified Bessel function of the third kind with index $1$. The parameters  $\alpha$, $\beta$, $\mu$ and $\delta$ determine the tail heaviness, asymmetry, location and scaling of the distribution. 
The relation between mean $\mathcal{M}$, variance $\mathcal{V}$, skewness $\mathcal{S}$, excess kurtosis $\mathcal{K}$ of the random variable $Y$ and the four parameters is given by~\cite{Eriksson_09}
\begin{equation}
\begin{aligned}
  \alpha & = 3\rho^{1/2}(\rho - 1)^{-1}\mathcal{V}^{-1/2}|\mathcal{S}|^{-1}  \\
  \beta & =  3(\rho - 1)^{-1}\mathcal{V}^{-1/2}\mathcal{S}^{-1}  \\
  \mu & =  \mathcal{M} - 3\rho^{-1}\mathcal{V}^{1/2}\mathcal{S}^{-1}  \\
  \delta & =   3\rho^{-1}(\rho - 1)^{1/2}\mathcal{V}^{-1/2}|\mathcal{S}|^{-1},
\end{aligned}
\label{eq:nig_moments}
\end{equation}

with $\rho = 3\mathcal{K}\mathcal{S}^{-2} - 4 > 1$. 

We approximate the PDF of $Z$ by matching the mean, variance, skewness and excess kurtosis of $Z$ with the NIG distribution in \eqref{eq:nig_pdf}, \TextRevision{where the skewness and kurtosis are a measure of the asymmetry and the tailedness of the distribution, respectively.} This approach is known as moment matching  method~\cite{Akhiezer_65}. The moments are then used to derive the parameters $\alpha$, $\beta$, $\mu$ and~$\delta$ according to \eqref{eq:nig_moments}. The mean, variance, skewness and excess kurtosis of $Z$ can be expressed in terms cumulants
\begin{equation}
\begin{aligned}
  \hat{\mathcal{M}} & = \kappa_1, \quad
  &\hat{\mathcal{V}} &  = \kappa_2, \\
  \hat{\mathcal{S}}  & = \frac{\kappa_3}{\kappa_2^{3/2}}, \quad
  & \hat{\mathcal{K}} &  = \frac{\kappa_4}{\kappa_2^{2}},
\end{aligned}
\label{eq:rel_moments_cumulants}
\end{equation}

where $\kappa_n$, $n\rrmv \rmv= \rmv\rrmv 1,\ldots,4$, denotes the $n$th cumulant of $Z$. The cumulants can be derived using the moment-generating function of $Z$ (see~\eqref{eq:mom_fct_ig_diff}) as
\begin{align}
  \kappa_n & = \frac{\partial^n}{\partial t^n}\text{ln}M_Z(t)\Bigg|_{t=0}.
  \label{eq:cumulants}
\end{align}

In the following, we present the analytical expressions for the parameters $\alpha$, $\beta$, $\mu$ and $\delta$ for two important use cases in MC (see Section \ref{sec:num_eval}).

\subsection{Use Case 1: $a_1 = a_2 = a$ and $b_1 = b_2 = b$} 
\label{subsec:case_1}
The moment-generating function of $Z$ can be written as
\begin{align}
  M_Z(t) 
         & = \exp\left(a b + a b - \left( a\sqrt{b^2 - 2t} + a\sqrt{b^2 + 2t} \right) \right).
  \label{eq:mom_fct_ig_diff_spec1}
\end{align}


The moments of $Z$ are obtained by applying \eqref{eq:rel_moments_cumulants}  and \eqref{eq:cumulants} and the parameters of the  NIG distribution can be derived using the relation in \eqref{eq:nig_moments}. The parameters can be expressed as
\begin{align}
  \alpha & = \frac{b^2}{\sqrt{5}}\,;\quad \beta = 0\,; \quad \mu = 0\,;\quad \delta = \frac{2}{\sqrt{5}}\frac{a}{b}.
  \label{eq:nig_moments_spec1}
\end{align}

The resulting NIG distribution is symmetric, since $\beta=0$.

\subsection{Use Case 2: $b_1/a_1 = b_2/a_2 = c$}
\label{subsec:case_2}
The moment-generating function of $Z$ can be written as
\begin{align}
  M_Z(t) 
         \rmv = \rmv \exp\left(\left(a_1^2 \rmv + \rmv a_2^2\right)c \rmv - \rmv \left( a_1\sqrt{a_1^2 c^2 - 2t} \rmv + \rmv a_2\sqrt{a_2^2 c^2 \rmv + \rmv2t} \right) \right).
  \label{eq:mom_fct_ig_diff_spec2}
\end{align}


Similar to use case 1, the moments of $Z$ and the parameters of the NIG distribution can be derived by applying \eqref{eq:nig_moments} -- \eqref{eq:cumulants}. The parameters 
are given by
\begin{equation}
\begin{aligned}
  \alpha & = \frac{\left(a_1^2 - a_2^2\right)^2 \sqrt{\left(a_1^4 + 3 a_1^2 a_2^2 + a_2^4\right)\left(a_1^2 - a_2^2\right)^{-2}} }
             {5 a_1^2 a_2^2 \sqrt{\left(a_1^{-2} + a_2^{-2}\right)c^{-3}}\left|\tau\right|}, \\
  \beta &  = \frac{-\left(a_1^2 - a_2^2\right) c^2}{5}, \\
  \mu  & = \frac{a_1^4 - a_2^4}{\left(a_1^4 + 3 a_1^2 a_2^2 + a_2^4\right) c}, \\
  \delta &  = \frac{\sqrt{5} a_1^2 a_2^2 \sqrt{\left(a_1^{-2} + a_2^{-2}\right)c^{-3}}
  }{\sqrt{ a_1^2 a_2^2\left(a_1^2 - a_2^2\right)^{-2}} \left(a_1^4 + 3 a_1^2 a_2^2 + a_2^4\right)\left|\tau \right|},
\end{aligned}
\label{eq:nig_moments_spec2}
\end{equation}

with 
\begin{align*}
  \tau = \frac{a_1^{-4} - a_2^{-4}}{\left(a_1^{-2} + a_2^{-2}\right)^{3/2} \sqrt{c}}.
\end{align*}

The NIG distribution is asymmetric, since $\beta \neq 0$.


\section{Asymptotic Tail Probability}
\label{sec:tail_approx}

\TextRevision{In this section, we derive the asymptotic tail probability of the random variable $Z$ given by $\lim_{z \rightarrow \infty }\bar{F}_Z(z)$.}
The tail probability of the random variable $Z$ can be written as
\begin{align}
  \bar{F}_Z(z) = \Pr(Z > z) 
           & = \int\limits_z^\infty \int\limits_{-\infty}^{\infty} f_{X_1}(u) f_{X_2}(u-t) \text{d}u \text{d}t \nonumber \\
           & = \int\limits_{-\infty}^{\infty} f_{X_2}(w) \int\limits_z^\infty f_{X_1}(w+t) \text{d}t \text{d}w \nonumber \\
          & = \int\limits_{-\infty}^{\infty} f_{X_2}(w) \bar{F}_{X_1}(w+z) \text{d}w.
  \label{eq:tail_prob}
\end{align}

\TextRevision{Based on this result, the following theorem presents an expression of the asymptotic tail probability}
\begin{Theorem}
  The asymptotic tail probability of the random  variable $Z = X_1 - X_2$, with \mbox{$X_1 \sim \text{IG}(a_1,b_1)$} and \mbox{$X_2 \sim \text{IG}(a_2,b_2)$}, is given by
  \begin{align}
    \lim_{z \rightarrow \infty }\bar{F}_Z(z) & = \bar{F}_{X_1}(z) M_{X_2}(-b_1^2/2),   
    \label{eq:tail_asym}
  \end{align}
  where $\bar{F}_{X_1}(z)$ denotes the tail probability of $X_1$ and $M_{X_2}(x)$ corresponds to the moment-generating function of $X_2$ defined in \eqref{eq:mom_fct_ig}.
\end{Theorem}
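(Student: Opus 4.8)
The plan is to start from the exact representation \eqref{eq:tail_prob} and extract the leading-order behaviour of $\bar F_Z(z)$ as $z\to\infty$ by a dominated-convergence argument, exploiting the fact that the IG tail decays exponentially with rate $b_1^2/2$. Since $f_{X_2}$ vanishes on $(-\infty,0]$, \eqref{eq:tail_prob} reads $\bar F_Z(z) = \int_0^\infty f_{X_2}(w)\,\bar F_{X_1}(w+z)\,\text{d}w$, and dividing by $\bar F_{X_1}(z)$ gives
\begin{align*}
  \frac{\bar F_Z(z)}{\bar F_{X_1}(z)} \;=\; \int_0^\infty f_{X_2}(w)\,\frac{\bar F_{X_1}(w+z)}{\bar F_{X_1}(z)}\,\text{d}w .
\end{align*}
So the claim \eqref{eq:tail_asym} is equivalent to showing that this ratio converges to $M_{X_2}(-b_1^2/2)$; note that \eqref{eq:tail_asym} is to be read as the asymptotic equivalence $\bar F_Z(z)\sim \bar F_{X_1}(z)\,M_{X_2}(-b_1^2/2)$, since the right-hand side is not constant in $z$.

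The key step is the pointwise limit $\lim_{z\to\infty}\bar F_{X_1}(w+z)/\bar F_{X_1}(z) = \exp(-b_1^2 w/2)$ for each fixed $w>0$. I would get this directly from the density \eqref{eq:ig_pdf}: writing $\bar F_{X_1}(w+z) = \int_z^\infty f_{X_1}(s+w)\,\text{d}s$ and using
\begin{align*}
  \frac{f_{X_1}(s+w)}{f_{X_1}(s)} = \Big(\tfrac{s+w}{s}\Big)^{-3/2}\exp\!\Big(\!-\tfrac{b_1^2 w}{2}\Big)\exp\!\Big(\tfrac{a_1^2}{2s}-\tfrac{a_1^2}{2(s+w)}\Big)\xrightarrow[s\to\infty]{}\exp\!\Big(\!-\tfrac{b_1^2 w}{2}\Big),
\end{align*}
one sandwiches $f_{X_1}(s+w)$ between $(\exp(-b_1^2w/2)\mp\varepsilon)\,f_{X_1}(s)$ for $s$ large, integrates over $s\in[z,\infty)$, and divides by $\bar F_{X_1}(z)=\int_z^\infty f_{X_1}(s)\,\text{d}s>0$. (Equivalently, the same limit follows from the closed-form CDF \eqref{eq:ig_cdf} together with the Mills-ratio expansion, which yields $\bar F_{X_1}(x)\sim C\,x^{-3/2}\exp(-b_1^2x/2)$.)

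With this in hand I would invoke dominated convergence on the $w$-integral: because $\bar F_{X_1}$ is non-increasing, $\bar F_{X_1}(w+z)/\bar F_{X_1}(z)\le 1$ for all $w\ge 0$ and $z>0$, while $\int_0^\infty f_{X_2}(w)\,\text{d}w = 1<\infty$ is the dominating function. Hence
\begin{align*}
  \lim_{z\to\infty}\frac{\bar F_Z(z)}{\bar F_{X_1}(z)} \;=\; \int_0^\infty f_{X_2}(w)\exp\!\Big(\!-\tfrac{b_1^2 w}{2}\Big)\text{d}w \;=\; \E{\exp\!\big(\!-\tfrac{b_1^2}{2}X_2\big)} \;=\; M_{X_2}\!\big(\!-\tfrac{b_1^2}{2}\big),
\end{align*}
the last equality being \eqref{eq:mom_fct_ig} evaluated at $t=-b_1^2/2$, which is well-defined since $b_2^2-2t = b_1^2+b_2^2>0$; rearranging gives \eqref{eq:tail_asym}.

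The main obstacle is the rigorous justification of the tail-ratio limit in the second step, namely upgrading the pointwise density ratio $f_{X_1}(s+w)/f_{X_1}(s)\to e^{-b_1^2w/2}$ to a statement about the ratio of the \emph{integrated} tails, and verifying that the polynomial prefactor $(s+w)^{-3/2}/s^{-3/2}$ and the cross term $\exp(a_1^2/(2s)-a_1^2/(2(s+w)))$ indeed wash out; this is a uniformity-in-$s$ argument. The $w$-dependence causes no trouble at that stage because $w$ is held fixed there, the integration in $w$ being deferred to the dominated-convergence step, where only the trivial bound by $1$ is needed.
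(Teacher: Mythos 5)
Your proposal follows essentially the same route as the paper's proof: both start from the representation \eqref{eq:tail_prob}, divide by $\bar{F}_{X_1}(z)$, apply the tail-ratio limit $\bar{F}_{X_1}(z+w)/\bar{F}_{X_1}(z)\to\exp(-b_1^2 w/2)$, and identify the resulting integral as $M_{X_2}(-b_1^2/2)$. The only difference is that you supply the details the paper leaves implicit --- a first-principles derivation of the tail-ratio limit (which the paper cites from the literature) and an explicit dominated-convergence justification for interchanging the limit with the integral over $w$ --- both of which are correct.
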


\begin{proof}
  Using the result of $\bar{F}_Z(z)$ in \eqref{eq:tail_prob} and the property $\lim_{z \rightarrow \infty} \bar{F}_{X_1} (z+w)/\bar{F}_{X_1}(z) = \exp(-b_1^2/2 w)$~\cite{Embrechts_83} the 
  asymptotic tail probability can be derived as follows
  \begin{align}
    \lim_{z \rightarrow \infty }\frac{\bar{F}_Z(z)}{\bar{F}_{X_1}(z)} 
      & = \int\limits_{-\infty}^{\infty} \lim_{z \rightarrow \infty } \frac{\bar{F}_{X_1}(z+w) }{\bar{F}_{X_1}(z)}f_{X_2}(w)  \text{d}w \nonumber \\
      & = \int\limits_{-\infty}^{\infty} \exp\left(-b_1^2/2 w\right)f_{X_2}(w)  \text{d}w \nonumber \\
      & = M_{X_2}(-b_1^2/2). \nonumber 
  \end{align}
\end{proof}

\section{Applications in Molecular Communications}
\label{sec:appl}

The approximations proposed in the previous sections are very important for the analysis of many MC systems. In particular, systems which encode information by a limited set of molecules and exploit arrival time and/or order of the individual molecules at the receiver, 
for example time between release and type-based information encoding

\subsection{\TextRevision{Time between Release Information Encoding}}
\label{subsec:timing_chan}
In this case, the information is encoded in the time duration between two consecutive molecule releases \cite{Farsad15}. The arrival time of a single molecule released at time $S$ is given by \mbox{$Y = S + X$}, where $X$ denotes the first hitting time and follows an IG distribution. Let $S_1$ and $S_2$ be the release time of the first and second molecule with $S_2 > S_1$. If the information is encoded in $Z_s = S_2 - S_1$, then, the channel model can be expressed as~\cite{Farsad15}
\begin{align}
        Z_y & = Z_s +  Z_x,
\end{align}

where $Z_y = Y_1 - Y_2$ and \mbox{$Z_x = X_1 - X_2$} denote the difference of arrival and first hitting time, respectively. Moreover, $Z_x$ is the random noise and, thus, its distribution is of interest in order analyze such a system.
Since no closed-form expression can be found for $Z_x$, the approximations proposed in Sections~\ref{sec:nig_approx}~ and~\ref{sec:tail_approx} can be used.

\subsection{\TextRevision{Type-based Information Encoding}}
\label{subsec:crossover_prob}
In this case, the information is encoded in different molecule types. Let's assume that two molecules of different types are released a time interval $T$ apart and that the first hitting time of the first and second released molecule 
 is given by $X_1$ and $X_2$, respectively. Then, the probability that the released molecules arrive out of order can be expressed as
\begin{align}
  \Pr(X_1 -X_2 > T) = \Pr(Z_x > T), 
  \label{eq:crossover_prob}
\end{align}

which is referred to as crossover probability. In order to analyze the crossover probability in \eqref{eq:crossover_prob} the distribution of~$Z_x$ is of interest. Since no closed-form expression can be found for $Z_x$,  the approximations proposed in Sections \ref{sec:nig_approx} and \ref{sec:tail_approx} can be applied. The crossover probability in~\eqref{eq:crossover_prob} can then be used, for example to calculate the error performance~\cite{Haselmayr_18, Shih13}.

\section{Numerical Evaluation}
\label{sec:num_eval}

\color{black}
In this section, we provide a comprehensive numerical evaluation of the approximations proposed in Sections \ref{sec:nig_approx} and~\ref{sec:tail_approx}. We consider a semi-infinite one-dimensional fluid environment with a constant flow of velocity $v$ from transmitter to receiver, which are placed at a distance~$d$. The transmitter is a point source and the receiver a perfectly absorbing boundary. Moreover, we consider the release of two molecules with diffusion coefficients $D_1$ and $D_2$, respectively. The first hitting time of each molecule follows an IG distribution, given by {$X_1 \sim \text{IG}(a_1,b_1)$} and \mbox{$X_2 \sim \text{IG}(a_2,b_2)$}.
The relation between the physical parameters $d$, $v$, $D_1$ and $D_2$ and the parameters $(a_1, b_1)$ and $(a_2, b_2)$ can be expressed as\footnote{Note that \eqref{eq:param_relation} can be obtained by comparing the parametrization of the IG distribution defined in \eqref{eq:ig_pdf} and \cite{Srinivas12}.} 
\begin{equation}
\begin{aligned}
  a_1 & = \frac{d}{\sqrt{2D_1}}, \quad
  & a_2 & = \frac{d}{\sqrt{2D_2}}, \\
  b_1 & = \frac{v}{\sqrt{2D_1}}, \quad
  & b_2 & = \frac{v}{\sqrt{2D_2}}.
\end{aligned}
  \label{eq:param_relation}
\end{equation}

The parameters $(a_1, a_2)$ and  $(b_1, b_2)$ are increased by increasing the distance $d$ and velocity $v$, respectively. Moreover, the aforementioned parameters are increased if the diffusion coefficients are decreased.

\begin{figure}[t!]
  \begin{center}
    \includegraphics[scale = 0.68]{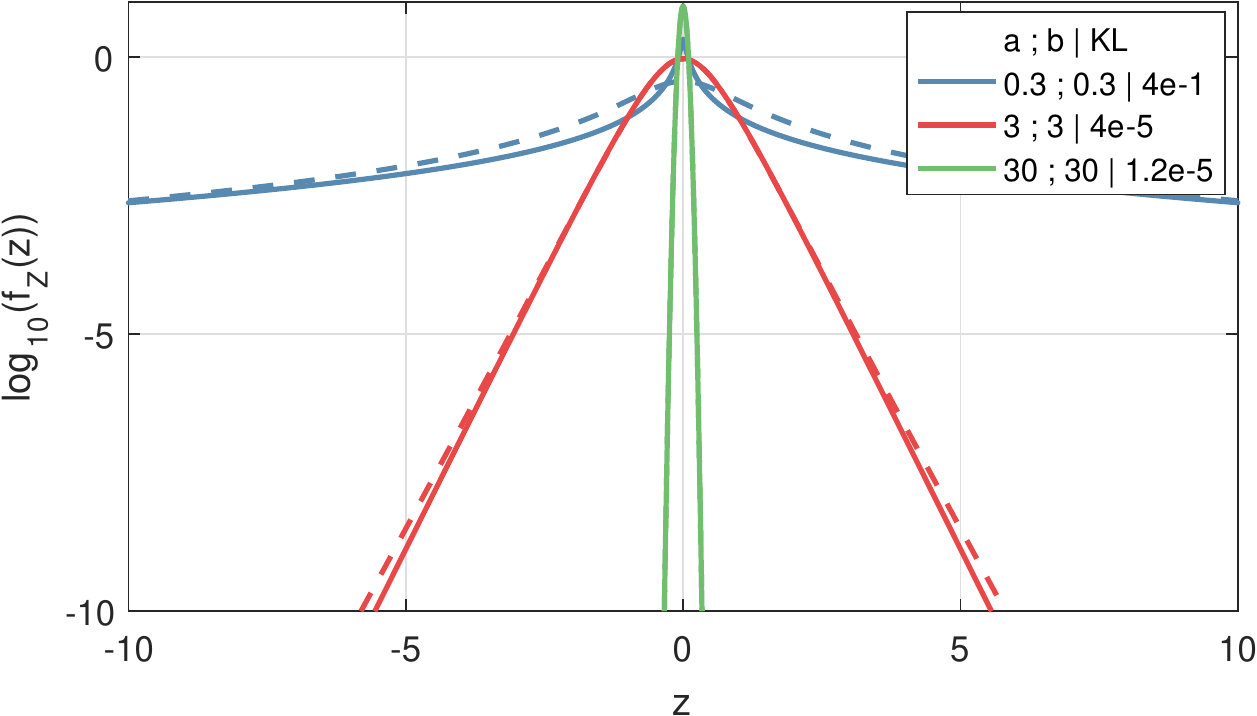}
  \end{center}
  \vspace{-2ex}
    \caption{Probability density function of $Z=X_1 - X_2$ for use case 1 \mbox{(see Section \ref{subsec:case_1})}. Solid lines: Numerical integration of \eqref{eq:pdf_ig_diff}; Dashed lines: NIG approximation with parameters in \eqref{eq:nig_moments_spec1}.}
  \label{fig:pdf_approx_symm_1}
\end{figure}

\begin{figure}[t!]
  \begin{center}
    \includegraphics[scale = 0.68]{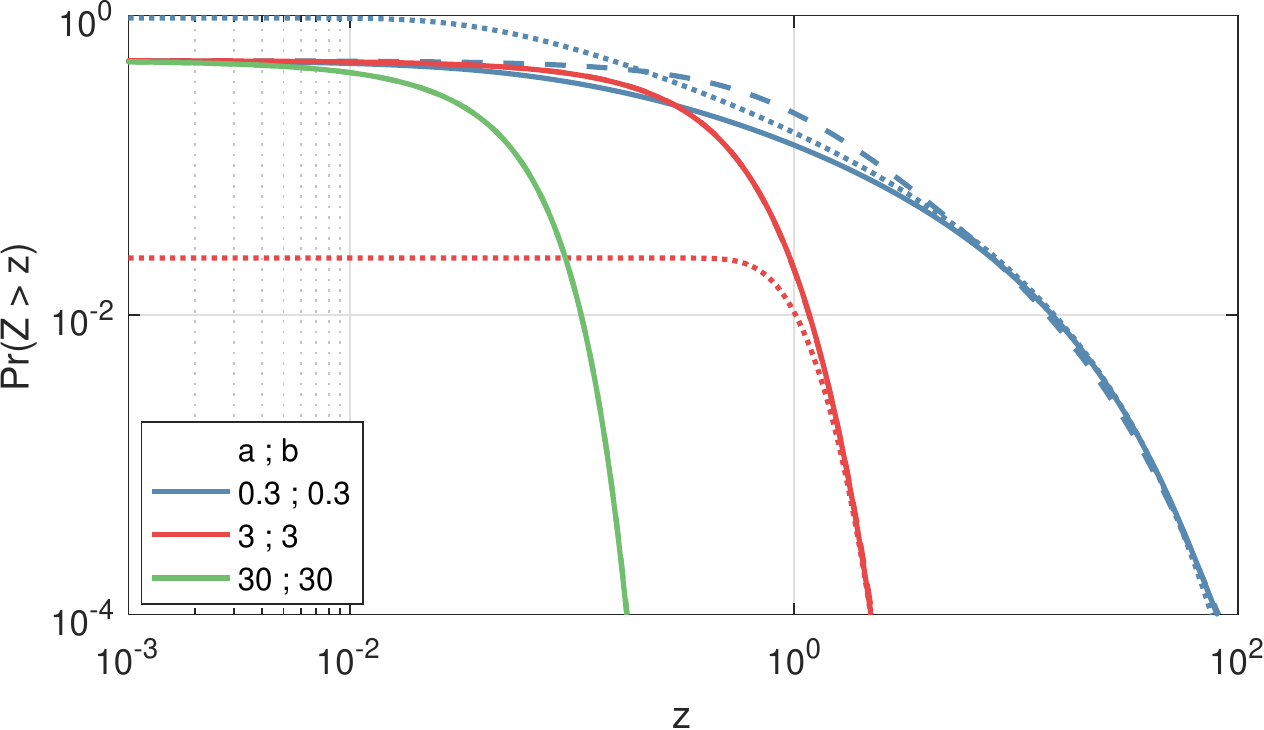}
  \end{center}
  \vspace{-2ex}
    \caption{Tail probability of $Z=X_1 - X_2$ for use case 1 \mbox{(see Section \ref{subsec:case_1})}. Solid lines: Numerical integration of \eqref{eq:tail_prob}; Dashed lines: NIG approximation with parameters in \eqref{eq:nig_moments_spec1}; Dotted lines: Asymptotic tail probability \eqref{eq:tail_asym}.}
  \label{fig:tail_approx_symm_1}
\end{figure}

Figs. \ref{fig:pdf_approx_symm_1} -- \ref{fig:tail_approx_asymm} show the PDF and the tail probability of the random variable $Z =X_1 - X_2$, corresponding to the difference of the first hitting times $X_1$ and $X_2$. In all figures, the solid lines indicate the results obtained through numerical integration of~\eqref{eq:pdf_ig_diff} and the dashed lines correspond to the NIG approximation proposed in Section~\ref{sec:nig_approx}. Additionally, 
in all figures showing the tail probability, the asymptotic results from Section~\ref{sec:tail_approx} are included as dotted lines. The parameters for the numerical evaluation are chosen such that they cover a wide range of  parameters used in related works (e.g., \cite{Kim_14,Shih13,Haselmayr_18}).
In order to measure the difference between the exact probability distribution obtained by numerical integration of~\eqref{eq:pdf_ig_diff} and the NIG approximation we used the Kullback-Leibler~(KL) divergence~\cite{Cover_91}. The results are shown in the legends of Figs.~\ref{fig:pdf_approx_symm_1},~\ref{fig:pdf_approx_symm_2}~and~\ref{fig:pdf_approx_asymm}, where a low value indicates a good match.

Figs.~\ref{fig:pdf_approx_symm_1} -- \ref{fig:tail_approx_symm_2} show the PDF and the tail probability  for use case~1 (see Section~\ref{subsec:case_1}), i.e. $a_1=a_2=a$ and $b_1=b_2=b$ ($D_1=D_2$). In this case the PDF is symmetric. We observe from Fig.~\ref{fig:pdf_approx_symm_1} ($a=b$) that the peak becomes narrow as the parameters increase. Moreover, we observe from Fig.~\ref{fig:pdf_approx_symm_2} that if~$a < b$ results in a narrow peak, while $a > b$ broadens the peak and results in a longer tail. The NIG approximation matches very well especially for large parameters.

From Figs.~\ref{fig:tail_approx_symm_1} and~\ref{fig:tail_approx_symm_2} we observe that the asymptotic tail probability converges to the actual probability if $z$ is sufficiently large. Moreover, we observe a tail probability floor for low values of~$z$, corresponding to $\bar{F}_{Z}(z) = M_{X_2}(-b_1^2/2)$ since $\bar{F}_{X_1}(z) \approx 1$ (see \eqref{eq:tail_asym}). Unfortunately, for larger values of~$a$ and~$b$ this floor is very low (e.g., $1.25\times 10^{-162}$ for $a=b=30$) and, thus, the asymptotic tail probability approaches the actual probability at very low probability values. For these cases the curves for the asymptotic tail probability are not shown for the sake of clarity.

Figs. \ref{fig:pdf_approx_asymm} and \ref{fig:tail_approx_asymm} show the PDF and tail probability for use case~2~\mbox{(see Section \ref{subsec:case_2})}, i.e. $b_1/a_1 = b_2/a_2$ ($D_1 \neq D_2$). In this case the PDF is asymmetric. 
We observe from Fig.~\ref{fig:pdf_approx_asymm} a positive skew (right tail is longer) if $a_1 < a_2$ and a negative skew (left tail is longer) if $a_1 > a_2$ . Moreover, we observe that the NIG approximation matches very well with the exact probability distribution.

Similar to use case 1, we observe from Fig.~\ref{fig:tail_approx_asymm} that the asymptotic tail probability converges to the actual probability, if $z$ is sufficiently large. Again, we do not show the curves for the asymptotic tail probability if the 
tail probability floor is very low.

\begin{figure}[t!]
  \begin{center}
    \includegraphics[scale = 0.68]{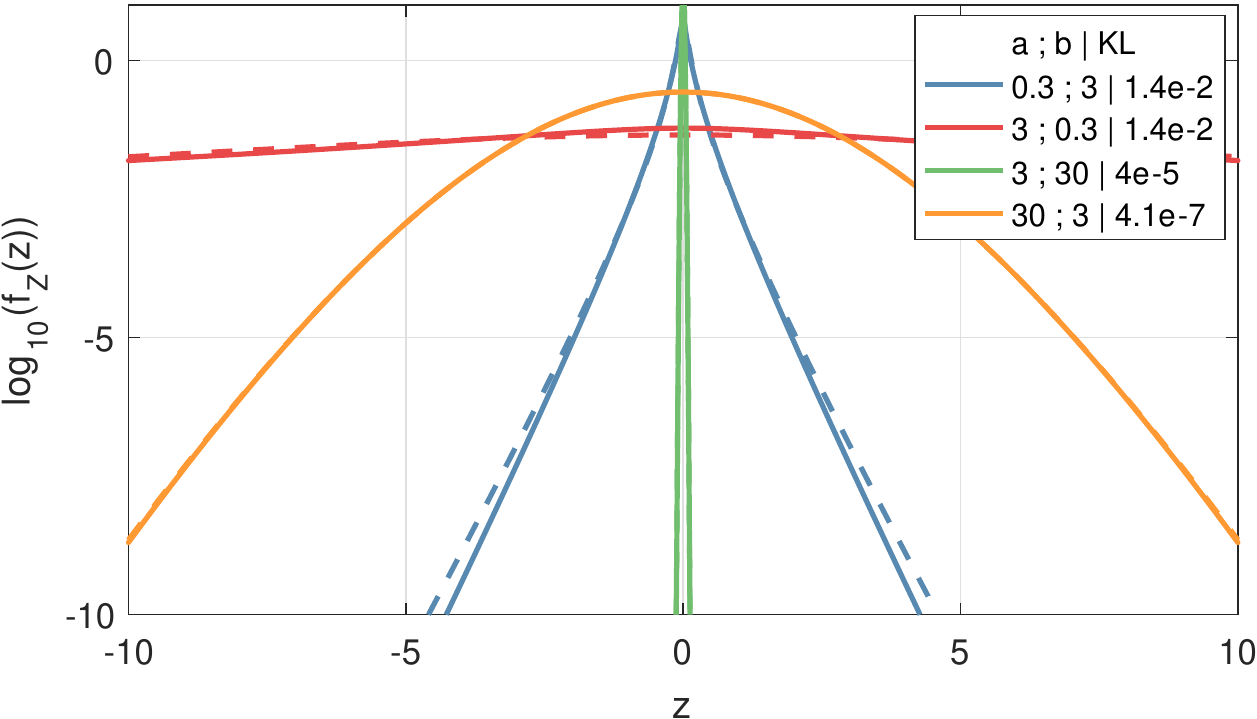}
  \end{center}
  \vspace{-2ex}
    \caption{Probability density function of $Z=X_1 - X_2$ for use case 1 \mbox{(see Section \ref{subsec:case_1})}. Solid lines: Numerical integration of \eqref{eq:pdf_ig_diff}; Dashed lines: NIG approximation with parameters in \eqref{eq:nig_moments_spec1}.}
  \label{fig:pdf_approx_symm_2}
\end{figure}

\begin{figure}[t!]
  \begin{center}
    \includegraphics[scale = 0.68]{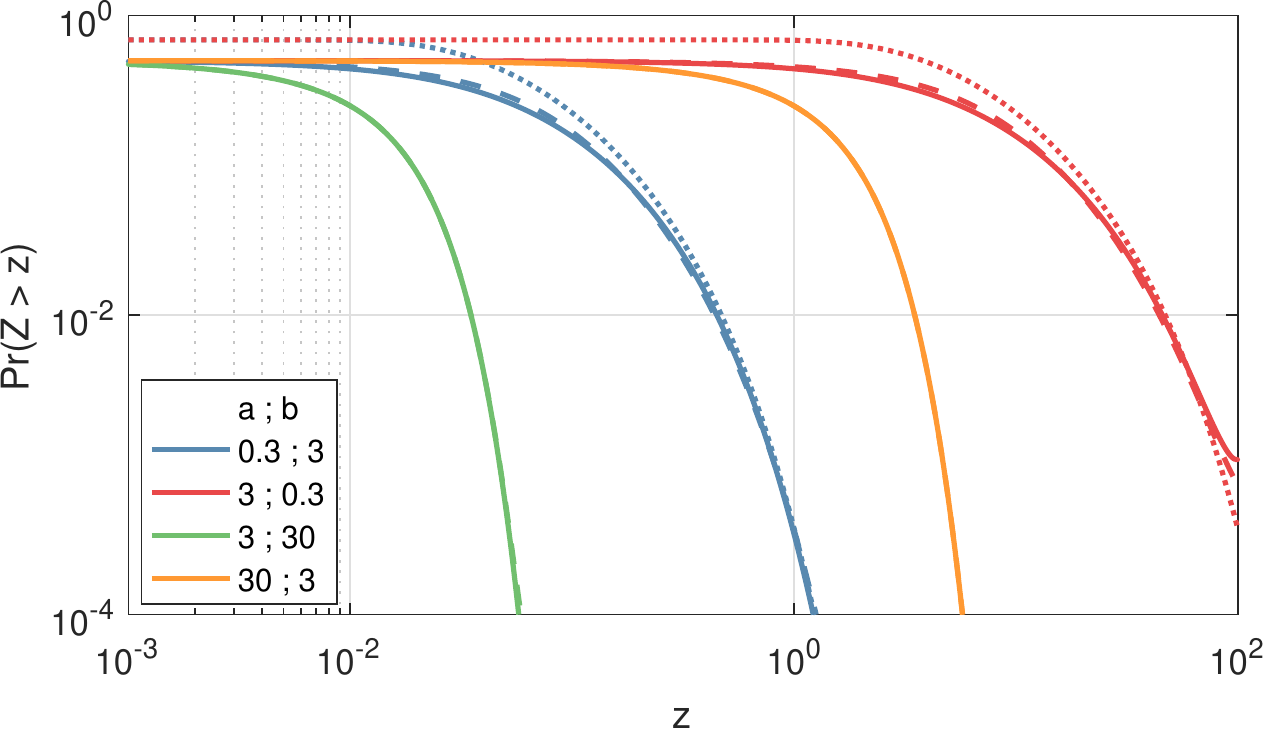}
  \end{center}
  \vspace{-2ex}
    \caption{Tail probability of $Z=X_1 - X_2$ for use case 1 \mbox{(see Section \ref{subsec:case_1})}. Solid lines: Numerical integration of \eqref{eq:tail_prob}; Dashed lines: NIG approximation with parameters in \eqref{eq:nig_moments_spec1}; Dotted lines: Asymptotic tail probability \eqref{eq:tail_asym}.}
  \label{fig:tail_approx_symm_2}
\end{figure}

In Fig. \ref{fig:tail_approx_symm_comp} we compare the asymptotic tail probability in~\eqref{eq:tail_asym} with a recently proposed approximation of the asymptotic tail probability~\cite{Shih13}
\begin{align}
  \bar{F}_Z(z) = \frac{2}{b^2}\exp\left(-\left(\sqrt{2}-1\right)ab\right)f_{X_1}(x),
  \label{eq:eq:tail_asym_soa}
\end{align}

\TextRevision{where $f_{X_1}(x)$ denotes the PDF of an IG distribution defined in~\eqref{eq:ig_pdf}. We observe that the asymptotic tail probability in~\eqref{eq:tail_asym} converges slightly faster than the approximation in~\eqref{eq:eq:tail_asym_soa}. Moreover, in contrast to \eqref{eq:eq:tail_asym_soa},  the approximation in \eqref{eq:tail_asym} is applicable for use case 1 and 2.}

\begin{figure}[t!]
  \begin{center}
    \includegraphics[scale = 0.68]{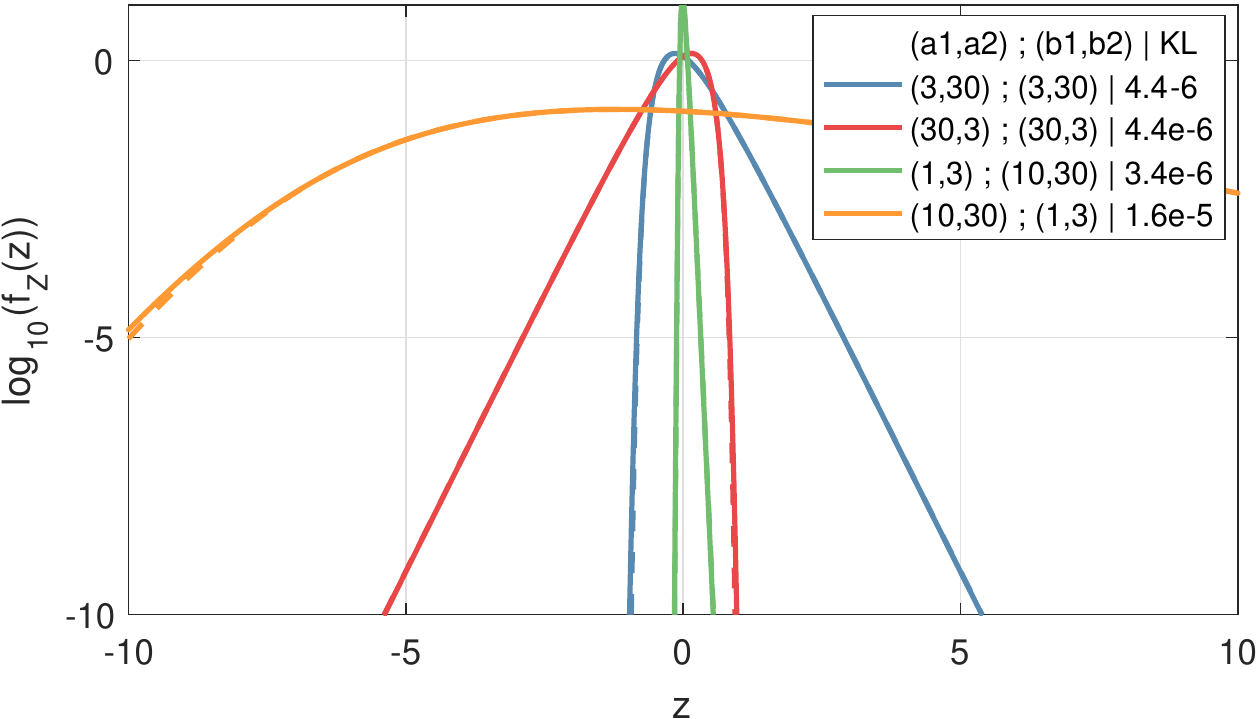}
  \end{center}
  \vspace{-2ex}
    \caption{Probability density function of $Z=X_1 - X_2$ for use case 2 \mbox{(see Section \ref{subsec:case_2})}. Solid lines: Numerical integration of \eqref{eq:pdf_ig_diff}; Dashed lines: NIG approximation with parameters in \eqref{eq:nig_moments_spec2}.}
  \label{fig:pdf_approx_asymm}
\end{figure}

\begin{figure}[t!]
  \begin{center} 
    \includegraphics[scale = 0.68]{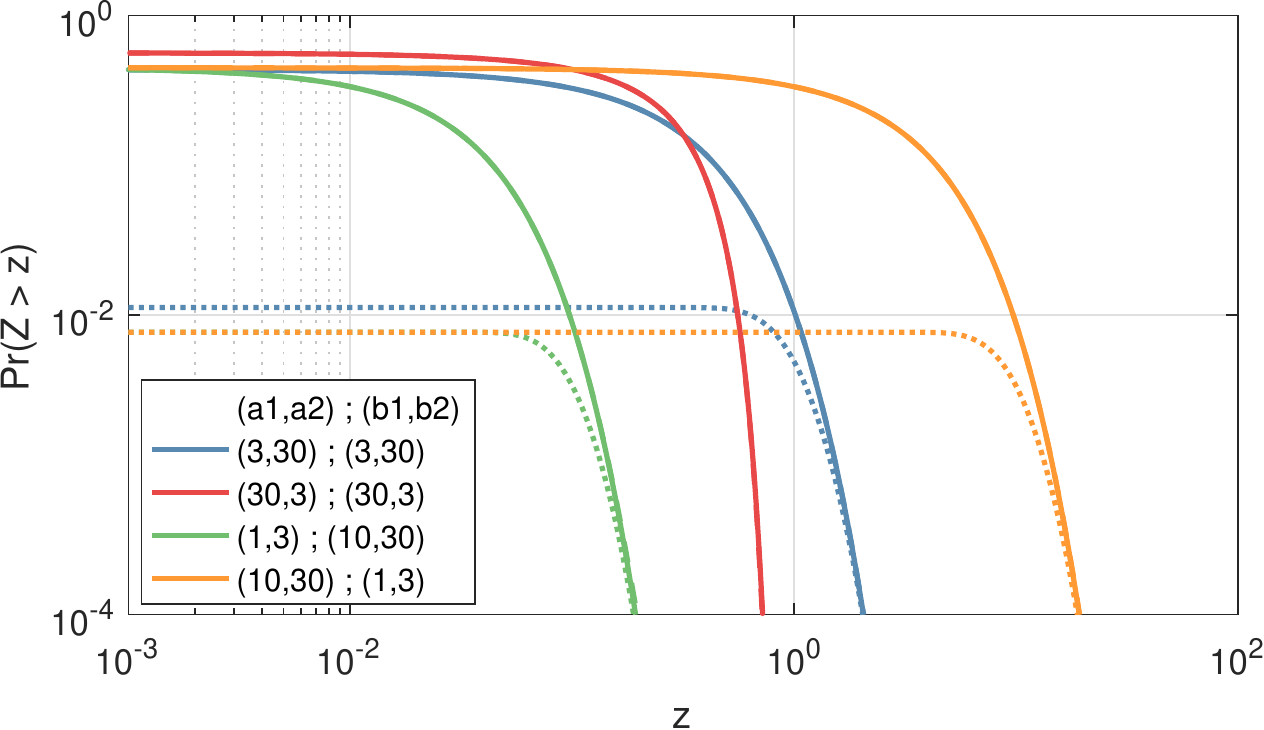}
  \end{center}
  \vspace{-2ex}
    \caption{Tail probability of $Z=X_1 - X_2$ for use case 2 \mbox{(see Section \ref{subsec:case_2})}.  Solid lines: Numerical integration of \eqref{eq:tail_prob}; Dashed lines: NIG approximation with parameters in \eqref{eq:nig_moments_spec2}; Dotted lines: Asymptotic tail probability \eqref{eq:tail_asym}.}
  \label{fig:tail_approx_asymm}
\end{figure}

\begin{figure}[t!]
  \begin{center}
    \includegraphics[scale = 0.68]{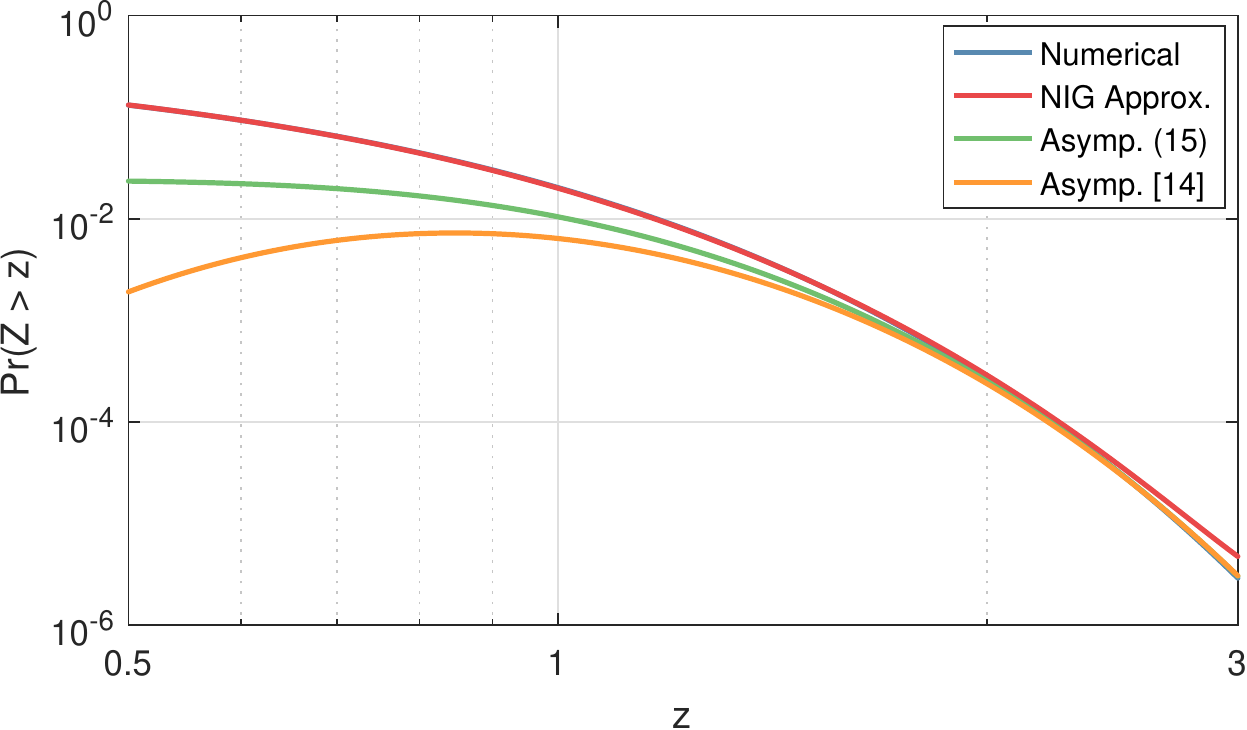}
  \end{center}
  \vspace{-2ex}
    \caption{Tail probability comparison of $Z=X_1 - X_2$ for use case 1 \mbox{(see Section \ref{subsec:case_1})} with $(a_1,a_2) = (b_1,b_2) = (3,3)$.}
  \label{fig:tail_approx_symm_comp}
\end{figure}

\color{black}

\section{Conclusions}
\label{sec:concl}

In this letter, we investigated the distribution for the difference of the first hitting times of two molecules in one-dimensional flow-induced diffusion channels. We proposed a moment-matching approximation by a NIG distribution and derived an expression for the asymptotic tail probability. Numerical evaluations confirmed the NIG approximation and showed that the asymptotic tail probability converges faster than state-of-the-art approximations. \TextRevision{We showed that the proposed approximations are very important for the analysis of MC systems which encode information by a limited set of molecules and exploit arrival time and/or order of the individual molecules at the receiver. In a future work, the presented results could serve as a basis for the analysis of more complex MC systems in flow-induced diffusion channels (see spatially distributed~MC~\mbox{\cite{Deng_17 ,Arifler_17,Zabini_18})}.}

\balance


\ifdefined\ACK
  \section{Acknowledgment}
  \input{Acknowledgment}
\fi


\bibliographystyle{IEEEtran}
\bibliography{IEEEabrv,References}

\end{document}